\numberwithin{equation}{section}
\newtheorem{lemma}{Lemma}[section]
\newtheorem{theorem}[lemma]{Theorem}
\newtheorem*{theorem*}{Theorem}
\theoremstyle{definition}
\newtheorem{remark}{Remark}[section]
      \newcommand{\N}{{\mathbb N}}
      \newcommand{\R}{{\mathbb R}}
      \newcommand{\C}{{\mathbb C}}
\newcommand{\sign}{\operatorname{sign}}
\newcommand{\uno}{1\!\!1}
\title[On the largest Bell violation attainable by a quantum state]{On the largest Bell violation attainable by a quantum state}
\author{Carlos Palazuelos}
\begin{document}

\addtolength{\parskip}{+1ex}


\keywords{Quantum information theory, Bell inequalities, projective tensor norm, Hilbertian subspaces}

\maketitle

\begin{abstract}
We study the projective tensor norm as a measure of the largest Bell violation of a quantum state. In order to do this, we consider a truncated version of a well-known SDP relaxation for the quantum value of a two-prover one-round game, one which has extra restrictions on the dimension of the SDP solutions. Our main result provides a quite accurate upper bound for the distance between the classical value of a Bell inequality and the corresponding value of the relaxation. Along the way, we give a simple proof that the best complementation constant of $\ell_2^n$ in $\ell_1(\ell_\infty)$ is of order $\sqrt{\ln n}$. As a direct consequence, we show that we cannot remove a logarithmic factor when we are computing the largest Bell violation attainable by the maximally entangled state.
\end{abstract}

\section*{Introduction}
A standard scenario to study quantum nonlocality consists of two spatially separated and non-communicating parties, usually called Alice and Bob. Each of them can choose among different measurements, labeled by $x=1,\cdots , N$ in the case of Alice and $y=1,\cdots , N$ in the
case of Bob. The possible outcomes of these measurements are labeled by $a=1,\cdots , K$ in the case of Alice and $b=1,\cdots
, K$ in the case of Bob. Following the standard notation, we will refer the observables $x$ and $y$ as \emph{inputs} and call $a$
and $b$ \emph{outputs}. For fixed $x,y$, we will consider the probability distribution $(P(a,b|x,y))_{a,b=1}^K$ of positive real numbers satisfying $$\sum_{a,b=1}^KP(ab|xy)= 1.$$ The collection $P=\Big(P(a,b|x,y)\Big)_{x,y; a,b=1}^{N,K}$ will be also referred as a \emph{probability distribution}.

Given a probability distribution $P=\big(P(a,b|x,y)\big)_{x,y; a,b=1}^{N,K}$, we will say that $P$ is
\begin{enumerate}
\item [a)] \emph{Classical} if
\begin{equation*}\label{classical}
P(a,b|x,y)=\int_\Omega P_\omega(a|x)Q_\omega(b|y)d\mathbb{P}(\omega)
\end{equation*}
for every $x,y,a,b$, where $(\Omega,\Sigma,\mathbb{P})$ is a probability space, $P_\omega(a|x)\ge 0$ for all $a,x,\omega$, $\sum_a
P_\omega(a|x)=1$ for all $x,\omega$ and analogous conditions for the $Q_\omega(b|y)$'s. We denote the set of classical probability
distributions by $\mathcal{L}$.
\item [b)] \emph{Quantum} if there exist two Hilbert spaces $H_1$ and $H_2$ such that
\begin{equation*}\label{quantum}
P(a,b|x,y)=tr(E_x^a\otimes F_y^b \rho)
\end{equation*}
for every $x,y,a,b$, where  $\rho\in B(H_1\otimes H_2)$ is  a density operator acting on $H_1\otimes H_2$ and $(E_x^a)_{x,a}\subset B(H_1)$, $(F_y^b)_{y,b}\subset B(H_2)$ are two sets of operators representing POVM measurements on Alice's and Bob's systems. That is, $E_x^a\geq 0$ for every  $x,a$, $\sum
_{a}E_x^a=\uno$ for every $x$, and analogous conditions for the $F_y^b$'s. We denote the set of
quantum probability distributions by $\mathcal{Q}$.
\end{enumerate}
It is not difficult to see that both $\mathcal{L}$ and $\mathcal{Q}$ are convex sets verifying $\mathcal{L}\subseteq \mathcal{Q}$. Furthermore, $\mathcal{L}$ is a polytope.
Note that in order to talk about $\mathcal L$ and $\mathcal Q$ we
must fixed the number of inputs $N$ and outputs $K$ in the previously introduced Alice-Bob
scenario. However, we will just write $P\in \mathcal L$ (resp. $Q\in
\mathcal Q$) when $N$ and $K$ are clear from the context. Given an element $M=(M_{x,y}^{a,b})_{x,y;a,b=1}^{N,K}\in \R^{N^2K^2}$, we denote
$$\langle
M,P\rangle=\sum_{x,y;a,b=1}^{N,K}M_{x,y}^{a,b}P(a,b|x,y)$$for every
probability distribution $P=(P(a,b|x,y))_{x,y; a,b=1}^{N,K}$. Then,
we define the \emph{largest Bell violation of $M\in \R^{N^2K^2}$} by
\begin{align*}
LV(M)=\frac{\omega^* (M)}{\omega (M)},
\end{align*}where $\omega^* (M):=\sup\Big\{|\langle M,Q\rangle|: Q\in \mathcal{Q} \Big\}$ and $\omega (M):=\sup\Big\{|\langle M,P\rangle|: P\in\mathcal{L}\Big\}$ (see \cite{JP}, \cite{JPPVW2}, \cite{JPPVW}). Actually, we must restrict this definition to those elements $M$ which do not vanish on all $\mathcal L$. In the following we will assume this fact and we will write $M\in \mathcal M^{N,K}$. Any $M\in \mathcal M^{N,K}$ will be referred as a \emph{Bell inequality}\footnote{Formally, Bell inequalities are those inequalities which describe the facets of the set $\mathcal{L}$. However, for our purposes it is very convenient considering this more general definition.}. We talk about a \emph{Bell inequality violation} when we have $LV(M)> 1$ for some $M\in \mathcal M^{N,K}$ (see \cite{Tsirelson}). Note that this fact is equivalent to say that $\mathcal{L}$ is strictly contained in $\mathcal{Q}$. This is also referred as \emph{quantum nonlocality}.

Quantum nonlocality is a crucial point in many different areas of
quantum information and quantum computation. Some examples can be
found in quantum cryptography (\cite{ABGMPS}, \cite{AMG}), in
testing random numbers (\cite{PCMBM}), in complexity theory
(\cite{CHTW}, \cite{KRT}) and in communication
complexity (\cite{BCMW}). This has motivated an increased interest
in the study of the value $LV(M)$ for some fixed $M$'s and also in
the study of $\sup_M LV(M)$, as a way of \emph{quantifying quantum
nonlocality} (see \cite{BRSW}, \cite{JP}, \cite{JPPVW2}, \cite{JPPVW}, \cite{PWJPV} for some recent works on the topic). In this work we
will be also concerned with quantifying quantum nonlocality but we
will change our perspective. Here, we will focus on the
quantum states. Our main question is: \emph{Given an $n$-dimensional bipartite state $\rho$, how large can its Bell violations be?} The mathematical formulation of this question requires some extra notation. We will denote by $\mathcal Q_\rho$ the set of all quantum probability distributions constructed with the state $\rho$ and,
given $M\in \mathcal M^{N,K}$, we will denote
$$\omega^*_\rho(M)=\sup\Big\{\big|\langle M, Q\rangle\big|:Q\in Q_\rho\Big\}\text{    }\text{  and   }\text{    }LV_\rho(M)=\frac{\omega_\rho^*(M)}{\omega(M)}.$$
Finally, we will define our key object:
\begin{align*}
LV_\rho:=\sup_{N,K}\sup_{M\in \mathcal M^{N,K}}LV_\rho(M).
\end{align*}When we are dealing with pure states $\rho=|\psi\rangle\langle \psi|$\footnote{Here, $|\psi\rangle$ denotes a unit vector in a Hilbert space and  $|\psi\rangle\langle \psi|$ denotes the rank-one projection defined by this element. We will explain the ket-bra notation in Section \ref{Section: Projective}.} we will just write $LV_{|\psi\rangle}$.
Then, the previous question can be reformulated as: \text{   }\emph{How large can} $LV_\rho$ \emph{be? }

The quantity $LV_\rho$ was first considered in \cite{JP} as a natural measure of \emph{how nonlocal a quantum state $\rho$ is}. Indeed, since nonlocality refers to probability distributions, it is natural to quantify the nonlocality of a state $\rho$ by measuring how nonlocal the quantum probability distributions constructed with $\rho$ can be. $LV_\rho$ measures exactly this.
%
%
In many cases one is interested in studying the value $LV_\rho(M)$
for fixed $M$ and $\rho$ and also in $\sup_\rho LV_\rho(M)=LV(M)$
for a fixed Bell inequality $M\in \mathcal M^{N,K}$. In this context
one can find very interesting works that mainly deal with particular
Bell inequalities like CHSH (\cite{CHSH}), CGLMP (\cite{CGLMP}) and
$I_{3322}$ (\cite{Fro}). Recent works have treated this problem
from a more general point of view by studying the asymptotic
behavior of $\sup_{M\in \mathcal M^{N,K}}LV(M)$ for fixed $N$ and
$K$ (\cite{BRSW}, \cite{JP}, \cite{JPPVW2}, \cite{JPPVW}, \cite{PWJPV}). Note that in these problems we fix the number of inputs $N$ and
outputs $K$, whereas the dimension $n$ of the state (and operator
measurements) is a free parameter in the optimization. In contrast,
$n$ is the fixed parameter in the problem considered in this work
(since we fix our state $\rho$), whereas we must consider $N$ and
$K$ as free parameters in order to optimize over all Bell
inequalities and all quantum measurements. This means that the
problem considered here is, somehow, dual of those considered
before. Our first result relates the quantity $LV_\rho$ with the projective tensor norm of $\rho$ (see definition in Section \ref{Section: Projective}), already
used in several contexts of quantum information theory (see for
instance \cite{HaMo}, \cite{Szarek}, \cite{Rudolph}).
\begin{theorem}\label{Theorem I}
Given an $n$-dimensional bipartite quantum state $\rho$, we can realize it as an element in the algebraic tensor product $M_n\otimes M_n$\footnote{The fact that $\rho$ is a state implies that $\|\rho\|_{S_1^{n^2}}=1$, where here $S_1^{n^2}$ denotes the space $M_{n^2}$ endowed with the trace norm.}, where $M_n$ denotes the space of complex matrices of size $n$. Then, 
\begin{align*}
LV_\rho\leq K\|\rho\|_{S_1^n\otimes_\pi S_1^n},
\end{align*}where $S_1^n$ denotes the space $M_n$ endowed with the trace norm, $\pi$ denotes the projective tensor norm and $K$ is a universal constant independent of the dimension\footnote{$K$ can be taken lower than or equal to $4$.}. In particular,
\begin{align*}
\sup_\rho LV_\rho\leq K n,
\end{align*}where the supremum runs over all $n$-dimensional bipartite states.
\end{theorem}
Hence, though the definition of $LV_\rho$ involves the supremum over all $N,K\in \N$ and $M\in \mathcal M^{N,K}$, we have that $LV_\rho< \infty$ for every finite dimensional state $\rho$.  Interestingly, in the recent paper  \cite{BRSW} the authors showed that the upper bound $O(n)$ given in Theorem \ref{Theorem I} is very tight. 
\begin{theorem}(Buhrman, Regev, Scarpa, de Wolf, \cite{BRSW})\label{Theorem BRSW I} 
Let us consider the $n$-dimensional maximally entangled state  $\rho:=|\psi_n\rangle\langle\psi_n|$, with $|\psi_n\rangle:= \frac{1}{\sqrt{n}}\sum_{i=1}^ne_i\otimes e_i\in \ell_2^n\otimes_2\ell_2^n$. Then,
\begin{align*}
LV_{|\psi_n\rangle}\geq C \frac{n}{(\ln n)^2},
\end{align*}
where $C$ is a universal constant independent of the dimension.
\end{theorem}
As we will explain in Section \ref{lower bounds for every state}, a careful study of this result allows us to show that for every pure state $\rho$ in dimension $n$ we have
\begin{align}\label{Estimate extension BRSW}
LV_{\rho}\geq C'\frac{\|\rho\|_{S_1^n\otimes_\pi S_1^n}}{(\ln n)^2},
\end{align}where $C'$ is a universal constant independent of the dimension\footnote{We will explain in Section \ref{lower bounds for every state} that one can actually obtain better lower bounds for  $LV_{\rho}$.}.

Theorem \ref{Theorem I} and Theorem \ref{Theorem BRSW I} (together with Equation (\ref{Estimate extension BRSW})) show the projective tensor norm as a good candidate to measure the largest Bell violation attainable by a (pure) state. This reminds us Rudolph's characterization of entangled states:  \emph{A quantum state $\rho\in M_n\otimes M_n$ is entangled if and only if $\|\rho\|_{S_1^n\otimes_\pi S_1^n}>1$}. In this sense the previous estimates show a link between quantum entanglement and quantum nonlocality, contrary to the spirit of the most recent results on the topic (see for instance \cite{BDHSW}, \cite{JP}, \cite{MeSc}). Then, we can wonder whether the projective tensor norm of a state $\|\rho\|_{S_1^n\otimes_\pi S_1^n}$ can measure its largest Bell violation $LV_\rho$ up to, maybe, a constant factor. Unfortunately, we will show in this work that this is not the case.
\begin{theorem}\label{Main I}
Let $\rho=|\psi_n\rangle\langle\psi_n|$ be the maximally entangled state in dimension $n$, then
\begin{align*}
LV_{|\psi_n\rangle}\leq D \frac{n}{\sqrt{\ln n}}
\end{align*}for a certain universal constant $D$.
\end{theorem}
Theorem \ref{Main I} partially answers the open question
posed in \cite[Section 1.3]{BRSW} about the possibility of removing
the logarithmic factor in Theorem \ref{Theorem BRSW I}. One cannot do
this if we restrict to the $n$-dimensional maximally entangled state (as it was used in \cite{BRSW}). On the other
hand, since $\||\psi_n\rangle\langle\psi_n|\|_{S_1^n\otimes_\pi
S_1^n}=n$, we deduce that we can not use the projective tensor norm
as an ``accurate'' measure  of the largest Bell violation attainable
by a quantum state. We have to consider, in general, an extra
logarithmic factor.  We must also mention that, beyond their own interest, these logarithmic-like estimates are very useful to obtain non-multiplicative results. Indeed, in the very recent paper \cite{Palazuelos Super-act} the previous estimates have been used to show that the largest Bell violation of a state $LV_\rho$ is a highly non-multiplicative measure. Actually, similar techniques have been used to show that quantum nonlocality can be superactivated (\cite{Palazuelos Super-act}).

Theorem \ref{Main I} is a consequence of a stronger result. Given any Bell inequality $M\in \mathcal M^{N,K}$, let us consider the following
optimization problem, which optimizes over families of real $n$-dimensional vectors $\{u_x^a\}_{x,a=1}^{N,K}$,
$\{v_y^b\}_{y,b=1}^{N,K}$, $z$:
\begin{align}\label{restrictions}
\begin{array}{lll}
                 \overline{\omega}_{OP_n}(M):= & \text{\textbf{Maximize:}} & \Big|\sum_{x,y,a,b=1}^{N,K}M_{x,y}^{a,b}\langle u_x^a, v_y^b\rangle\Big|\\
                    &  \text{\textbf{Subject to:}}   & \forall x,y, \sum_{a=1}^K u_x^a=\sum_{b=1}^K v_y^b=z \text{   }  (*),\\
                     &            & \forall x, \sup_{\alpha_x^a=\pm 1}\big\|\sum_{a=1}^K\alpha_x^a u_x^a\big\|\leq 1, \\
                 &            & \forall y, \sup_{\alpha_y^b=\pm 1}\big\|\sum_{b=1}^K\alpha_y^b v_y^b\big\|\leq 1.
                           \end{array}\end{align}
As we will explain in Section \ref{Section: Relaxation},  $\overline{\omega}_{OP_n}(M)$ is a natural generalization of a well-known semidefinite programming (SDP) relaxation for the classical and quantum value of a two-prover one-round game\footnote{In fact, restriction (*) is not needed when we consider two-prover one-round games, but it must be considered when we work with general Bell inequalities (see Section  \ref{Section: Relaxation} for a complete explanation).}. Then, the main result of this work states as follows.
\begin{theorem}\label{Main II}
For all natural numbers $n$, $N$, $K$ and every $M\in \mathcal M^{N,K}$ we have
\begin{align*}
\overline{\omega}_{OP_n}(M)\leq D'\frac{n}{\sqrt{\ln n}}\omega(M),
\end{align*}where $D'$ is a universal constant.
\end{theorem}
We think that Theorem \ref{Main II}  can be of independent interest for computer scientists. $\overline{\omega}_{OP_n}$ is the natural generalization of $SDP$ (see Section \ref{Section: Relaxation}) when we want to impose ``low dimensional solutions''(where orthogonality restrictions no longer make sense since we will have $K> n$). As far as we know the question of rounding low-dimensional solutions of these kinds of optimization problems has not received much attention. Some interesting papers in this direction are
\cite{AvZw}, \cite{BFV}, \cite{BFVII}.

Finally, it is interesting to mention that Theorem \ref{Main II} is closely related to the problem of finding the best complementation constant of $\ell_2^n$ in $\ell_1(\ell_\infty)$ (see Section \ref{Some comments about the optimality} for details). In fact, in this work we present a simple proof that such a complementation constant is of order $\Omega(\sqrt{\ln n})$ (see Theorem \ref{optimal complementation}, Part 1. below). This estimate was first proved by Bourgain in \cite{Bourgain} (see also \cite{BCLT} for an alternative proof of the order $\Omega(\ln n)^\beta$ for a certain $\beta>0$). However, the proof given in our work is completely different and arguably simpler, based on the concentration of measure phenomenon. Combined with previous results this estimate allows us to state the following result.
\begin{theorem}\label{optimal complementation}
\

\begin{enumerate}
\item[1.] Given linear maps $S:\ell_2^n\rightarrow \ell_1(\ell_\infty)$ and $T:\ell_1(\ell_\infty)\rightarrow \ell_2^n$ such that $T\circ S=id_{\ell_2^n}$. We have that $\|T\|\|S\|\geq K\sqrt{\ln n}$, where $K$ is a universal constant. Furthermore, this estimate is optimal: There exist linear maps $j:\ell_2^n\longrightarrow \ell_1^n(\ell_\infty^n)$ and $P:\ell_1^n(\ell_\infty^n)\rightarrow \ell_2^n$ such that $P\circ j=id_{\ell_2^n}$ and $\|j\|\|P\|\leq \tilde{K}\sqrt{\ln n}$, where $\tilde{K}$ is a universal constant.
\item[2.] If we consider complex Banach spaces, the same estimate holds and one has that such an optimality is true even in the following non-commutative sense: There exist linear maps $j:R_n\cap C_n\longrightarrow \ell_1(\ell_\infty)$ and $P:\ell_1(\ell_\infty)\rightarrow R_n\cap C_n$ such that $P\circ j=id_{\ell_2^n}$ and $\|j\|_{cb}\|P\|_{cb}\leq \tilde{K}\sqrt{\ln n}$, where $\tilde{K}$ is a universal constant. Here $R_n\cap C_n$ denotes the complex space $\ell_2^n$ endowed with the $R\cap C$ operator space structure, $\ell_1(\ell_\infty)$ is considered with its natural operator space structure and $\|\cdot\|_{cb}$ denotes the completely bounded norm.
\end{enumerate}
\end{theorem}
The paper is organized as follows. We start Section \ref{Section:
Projective} by giving a very brief introduction about some basic notation in quantum information theory. Then, we show that the projective tensor norm can be seen as a
good measure for the largest Bell violation of a quantum state
$LV_\rho$. In particular, we provide upper and lower bounds for this
largest Bell violation by proving Theorem \ref{Theorem I} and the estimate in (\ref{Estimate extension BRSW}). This section should be considered as the motivation for the subsequent results in the paper. In Section \ref{Section: Relaxation}
we introduce a modified version of a SDP relaxation already used in
computer sciences to approximate the classical and quantum value of
a $2P1R$-game and we explain how it is related to the quantity $\overline{\omega}_{OP_n}(M)$ introduced above. At the end of this section, we show how to obtain Theorem  \ref{Main I} from Theorem \ref{Main II}  although we postpone the proof of Theorem \ref{Main II} to Section \ref{Section: proofs}. Finally, in Section \ref{Section: proofs} we present the proof of our main result, Theorem \ref{Main II}, and we comment some points about its optimality.
\section{The projective tensor norm as a measure of the largest Bell violation}\label{Section: Projective}
Let us start this section with a brief introduction about the ket-bra notation commonly used in quantum information theory (QIT). We will denote the $n$-dimensional complex Hilbert space by $\ell_2^n$\footnote{In fact, in QIT this is usually denoted by $\C^n$.}. Then, a general unit element of this Hilbert space is denoted by $|\psi\rangle$, while notation $\langle\psi|$ is used to denote the same element when it is realized in the dual space. In this way, the standard inner product  $\langle\psi |\varphi\rangle$ gives us the duality action. Also, we can then express the rank-one projection defined by the state $|\psi\rangle$ by $|\psi\rangle\langle\psi|:\ell_2^n\rightarrow \ell_2^n$, so that $|\psi\rangle\langle\psi|(|\varphi\rangle)=|\psi\rangle\langle\psi|\varphi\rangle=\langle\psi|\varphi\rangle|\psi\rangle$. This operators are called \emph{pure states} (and sometimes denoted by the vector $|\psi\rangle$ itself), while general \emph{states} (or \emph{density operators}) are positive operators $\rho:\ell_2^n\rightarrow \ell_2^n$ of trace one. It is also interesting to mention that the elements of the canonical basis of $\ell_2^n$ are usually denoted by $|1\rangle,\cdots ,|n\rangle$. Finally, we will mention that the tensor product symbol is usually omitted. More precisely, if we have two states $|\psi\rangle, |\varphi\rangle\in \ell_2^n$, we will write  $|\psi\rangle|\varphi\rangle\in \ell_2^n\otimes \ell_2^n$ (rather than $|\psi\rangle\otimes |\varphi\rangle$). Moreover, the canonical basis of $\ell_2^n\otimes \ell_2^n$ is usually expressed by using an even more compressed notation, $(|ij\rangle)_{i,j=1}^n$, where $|ij\rangle=|i\rangle|j\rangle=e_i\otimes e_j$ for every $i,j$. In this way, a general norm-one element $|\psi\rangle\in \ell_2^n\otimes_2 \ell_2^n$ can be written as $|\psi\rangle=\sum_{i,j=1}^na_{i,j}|ij\rangle$, where the $a_{i,j}$ are complex coefficients verifying $\sum_{i,j=1}^n |a_{i,j} |^2=1$\footnote{Actually, we can use its Hilbert-Schmidt decomposition to write it as $|\psi\rangle=\sum_{i=1}^n\lambda_i|f_i\rangle|g_i\rangle$ for certain orthonormal systems $(|f_i\rangle)_i$ and $(|g_i\rangle)_i$.} . A particularly interesting example for us is given by the $n$-dimensional maximally entangled state, already introduced in Theorem \ref{Theorem BRSW I}, $\rho:=|\psi_n\rangle\langle\psi_n|$, where $$|\psi_n\rangle:= \frac{1}{\sqrt{n}}\sum_{i=1}^n|ii \rangle\in \ell_2^n\otimes_2\ell_2^n.$$
\subsection{An upper bound for the largest Bell violation of a quantum state}
In order to study the value $LV_\rho$ we will start with an
alternative (somehow dual) statement of \cite[Proposition 2]{JPPVW}
(see also \cite[Theorem 3]{Loubenets} for a related result).
For the sake of completeness we will present a very simple new proof of this
result avoiding operator space terminology. Before, we need to recall
the definition of the projective and injective tensor norms, already
used in several contexts of quantum information theory (see for
instance \cite{HaMo}, \cite{Rudolph}, \cite{Szarek}).

Given a finite dimensional normed space $X$, we denote by $B_X=\big\{x\in X:\|x\|\leq 1\big\}$ its (closed) unit ball. Also, we consider its dual space, $X^*=\Big\{x^*:X\rightarrow \C: x^*   \text{   }\text{ is linear}\Big\}$, with the norm $\|x^*\|= \sup_{x\in B_X}|x^*(x)|$.
If $X$, $Y$ are finite dimensional normed spaces, we will denote the algebraic tensor product by $X\otimes Y$. Then, for a given $u\in X\otimes Y$ we define its \emph{projective tensor norm} as
\begin{align*}
\pi(u)=\inf\Big\{\sum_{i=1}^N\|x_i\|\|y_i\|:N\in \N, u=\sum_{i=1}^Nx_i\otimes y_i\Big\}.
\end{align*}We will denote $X\otimes_\pi Y$ the space $X\otimes Y$ endowed with the projective tensor norm. It is very well known that $\ell_2^n\otimes_\pi\ell_2^n=S_1^n$, where $S_1^n$ is the space of trace class operators from $\ell_2^n$ to $\ell_2^n$. On the other hand, for any $v=\sum_{i=1}^Nx_i\otimes y_i\in X\otimes Y$ we define its \emph{injective tensor norm} as
\begin{align*}
\epsilon(v)=\sup\Big\{\Big|\sum_{i=1}^Nx^*(x_i)y^*(y_i)\Big|:x^*\in B_{X^*}, y^*\in B_{Y^*}\Big\}.
\end{align*}We will denote $X\otimes_\epsilon Y$ the space $X\otimes Y$ endowed with the injective tensor norm.
One can check that the projective and injective tensor norms are dual of each other. Specifically, for any pair of finite dimensional normed spaces $X$, $Y$ we have
\begin{align}\label{duality pi-epsilon}
(X\otimes_\pi Y)^*=X^*\otimes_\epsilon Y^* \text{    }\text{  isometrically}.
\end{align}
In particular, we recover the duality relation $$(S_1^n)^*=(\ell_2^n\otimes_\pi\ell_2^n)^*=\ell_2^n\otimes_\epsilon\ell_2^n=M_n,$$ where $M_n$ denotes the space of maps from $\ell_2^n$ to $\ell_2^n$ with the operator norm.

Finally, we will mention that both tensor norms, projective and injective, can be defined on the tensor product of $N$ spaces exactly in the same way. One can see that Equation (\ref{duality pi-epsilon}) still holds in this general context and, furthermore, both norms are commutative and associative (respect to the spaces in the tensor products).
\begin{proof}[Proof of Theorem \ref{Theorem I}]
Let us consider a quantum strategy constructed with the state $\rho$: $$Q=Q(a,b|x,y)=tr(E_x^a\otimes F_y^b\rho)$$ for every $x,y,a,b$; where $\{E_x^a\}_{x,a}$  and $\{F_y^b\}_{y,b}$ denote POVMs. We do not specify the number of inputs nor outputs because the result will not depend on that. Then, for every $M$ we have
\begin{align*}
\big|\langle M, Q\rangle\big|=\Big|tr\Big(\sum_{x,y;a,b}M_{x,y}^{a,b}E_x^a\otimes F_y^b\rho\Big)\Big|\leq \|\rho\|_{S_1^n\otimes_\pi S_1^n}\Big\|\sum_{x,y;a,b}M_{x,y}^{a,b}E_x^a\otimes F_y^b\Big\|_{M_n\otimes_\epsilon M_n}.
\end{align*}Here, we have used that $(M_n\otimes_\epsilon M_n)^*=S_1^n\otimes _\pi S_1^n$ and the dual action is given by the trace. In order to obtain our statement it suffices to show that
\begin{align}\label{bound 4}
\Big\|\sum_{x,y;a,b}M_{x,y}^{a,b}E_x^a\otimes F_y^b\Big\|_{M_n\otimes_\epsilon M_n}\leq 4\omega(M).
\end{align} To this end, we recall that given en element $\delta$ in the unit ball of $S_1^n$, we can write $\delta=\delta_1+i\delta_2$ where $\delta_1$ and $\delta_2$ are self-adjoint elements in the unit ball of $S_1^n$. Hence, if we denote by $A=(M_n^{sa}, \|\cdot\|_1)$ the space of self adjoint operators with the trace norm, we have
\begin{align*}
\Big\|\sum_{x,y;a,b}M_{x,y}^{a,b}E_x^a\otimes F_y^b\Big\|_{M_n\otimes_\epsilon M_n}\leq 4\sup \Big\{\Big|\sum_{x,y;a,b}M_{x,y}^{a,b}tr(E_x^a\rho_1)tr(F_y^b\rho_2)\Big|:\rho_1,\rho_2\in B_A\Big\}.
\end{align*}Then, we obtain (\ref{bound 4}) by noting that $B_A=conv(S_n\bigcup -S_n)$, where $S_n$ denotes the set of states in $M_n$ and the fact that $\big(tr(E_x^a\delta_1)tr(F_y^b\delta_2)\big)_{x,y;ab}$ is a classical probability distribution for $\delta_1, \delta_2\in S_n$.

In order to see the second assertion in the statement note that, by convexity, it suffices to show it for pure states $\rho=|\psi\rangle\langle \psi|$. On the other hand, we know that  $S_1^n\otimes_\pi S_1^n=\ell_2^n\otimes_\pi\ell_2^n \otimes_\pi\ell_2^n\otimes_\pi\ell_2^n$. Therefore, using that the projective tensor norm does not change if we apply a unitary on each space in the tensor product, one can even assume that our state is diagonal $|\psi\rangle=\sum_{i=1}^n\alpha_i|ii\rangle$ and it is defined with positive coefficients. Furthermore, using the commutativity property of the projective tensor norm we have
\begin{align*}
\big\||\psi\rangle\langle\psi|\big\|_{S_1^n\otimes_\pi
S_1^n}=\Big\|\sum_{i,j=1}^n\alpha_i\alpha_j|ijij\rangle\Big\|_{\bigotimes_{\pi, i=1}^4\ell_2^n}=
\Big\|\sum_{i=1}^n\alpha_i|ii\rangle\otimes
\sum_{j=1}^n\alpha_j|jj\rangle\rangle\Big\|_{\bigotimes_{\pi, i=1}^4\ell_2^n}\\=\Big\|\sum_{i=1}^n\alpha_i|ii\rangle\Big\|^2_{\ell_2^n\otimes_\pi\ell_2^n}=\Big(\sum_{i=1}^n\alpha_i\Big)^2:=\||\psi\rangle\|_1^2.
\end{align*} Since $\sum_{i=1}^n\alpha_i^2=1$, the statement follows from the inequality $\sum_{i=1}^n\alpha_i\leq \sqrt{n}\big(\sum_{i=1}^n\alpha_i^2\big)^{\frac{1}{2}}$.
\end{proof}
In this paper we will restrict to pure states. As it was explained in the previous proof, given a diagonal unit element with positive coefficients\footnote{To compute Bell violations we can always assume that our state is of this form. Indeed, this can be done by composing the corresponding POVMs $(E_x^a)_{x,a}$ and  $(F_y^b)_{y,b}$ with certain unitaries.} $|\varphi\rangle=\sum_{i=1}^n\alpha_i|ii\rangle\in \ell_2^n\otimes_2 \ell_2^n$ and denoting $\rho=|\varphi\rangle\langle\varphi|\in S_1^n\otimes S_1^n$, we have that
\begin{align*}
\|\rho\|_{S_1^n\otimes_\pi S_1^n}=\||\varphi\rangle\|_1^2=\big(\sum_{i=1}^n \alpha_i\big)^2.
\end{align*}Theorem \ref{Theorem I} implies that for every pure state $|\varphi\rangle$ we have
\begin{align*}
LV_{|\varphi\rangle}\leq K\||\varphi\rangle\|_1^2.
\end{align*}
\subsection{Lower bounds for every pure state}\label{lower bounds for every state}
In the remarkable paper \cite{BRSW} the authors showed that the upper bound $O(n)$ given in Theorem \ref{Theorem I} is very tight. Before going on that, we will explain something about \emph{two-prover one-round games (2P1R)-games}. These are particularly interesting Bell inequalities of the form $M_{x,y}^{a,b}=\pi(x,y)V(a,b|x,y)$ for every $x,y=1,\cdots, N$, $a,b=1,\cdots ,K$; where $\pi:[N]\times [N]\rightarrow [0,1]$ is a probability distribution and $V:[K]\times[K]\times[N]\times[N]\rightarrow \{0,1\}$ is a boolean function, usually called \emph{predicate function}. In particular, $2P1R$-games have positive coefficients. These kinds of Bell inequalities are very relevant in computer science because many important problems in complexity theory can be stated in terms of these games. We will keep notation $G=(G_{x,y}^{a,b})_{x,y;a,b=1}^{N,K}$ for these kinds of Bell inequalities. Then, Theorem \ref{Theorem BRSW I} can be stated in the following more precise way.
\begin{theorem}[\cite{BRSW}]\label{Theorem BRSW II}
Let $n$ be a natural number. There exists a game $G_{KV}$ such that
\begin{align*}
LV_{|\psi_n\rangle}(G_{KV})\geq C \frac{n}{(\ln n)^2},
\end{align*}where $|\psi_n\rangle:= \frac{1}{\sqrt{n}}\sum_{i=1}^n|ii\rangle$ is the maximally entangled state in dimension $n$. Here, $C$ is a universal constant which does not depend on the dimension.
\end{theorem}
The game $G_{KV}$ is usually called \emph{Khot-Visnoi game} (or KV game) because it was first defined by Khot and Visnoi to show a large integrality gap for a SDP relaxation of certain complexity problems (see \cite{KhVi} for details). Since the KV game will play an important role in this work we will give a brief description of it (see \cite{BRSW} for a much more complete explanation). For any $n=2^l$ with $l\in\N$ and every $\eta\in [0,\frac{1}{2}]$ we consider the group $\{0,1\}^n$ and the Hadamard subgroup $H$. Then, we consider the quotient group $G=\{0,1\}^n/H$ which is formed by $\frac{2^n}{n}$ cosets $[x]$ each with $n$ elements. The questions of the games $(x,y)$ are associated to the cosets whereas the answers $a$ and $b$ are indexed in $[n]$. The game works as follows: The referee chooses a uniformly random coset $[x]$ and one element $z\in \{0,1\}^n$ according to the probability distribution $pr(z(i)=1)=\eta$, $pr((z(i)=0)=1-\eta$ independently of $i$. Then, the referee asks question $[x]$ to Alice and question $[x\oplus z]$ to Bob. Alice and Bob must answer one element of their corresponding cosets and they win the game if and only if $a\oplus b=z$. Given a probability distribution $P=\big(P([x],[y]|a,b)\big)_{[x],[y]=1;a,b=1}^{\frac{2^n}{n},n}$ it is easy to see that
\begin{align*}
\langle G_{KV},P\rangle=\mathbb{E}_z\frac{n}{2^n}\sum_{[x]}\sum_{a\in[x]}P\big(a,a\oplus z|[x],[x+z]\big).
\end{align*}Now, as a consequence of a clever use of hypercontractive inequality one can see that $\omega(G_{KV})\leq n^{-\frac{\eta}{1-\eta}}$ (see \cite[Theorem 7]{BRSW}). Furthermore, one can define, for any $a\in\{0,1\}^n$, the vector $|u_a\rangle\in \C^n$ by $u_a(i)=\frac{(-1)^{a(i)}}{\sqrt{n}}$ for every $i=1,\cdots, n$. It is easy from the properties of the Hadamard group that $\big(P_a=|u_a\rangle \langle u_a|\big)_{a\in [x]}$ defines a von Neumann measurement\footnote{von Neumann measurements are particular examples of POVMs, where now the operators are orthogonal projections summing up to the identity.} (vNm) for every $[x]$. These measurements will define Alice and Bob's quantum strategies.

A careful study of the KV game shows that for every pure state $|\varphi\rangle$ in dimension $n$ we have
\begin{align}\label{Estimate extension of BRSW II}
LV_{|\varphi\rangle}(G_{KV})\geq C\Big(1+ 4\frac{\||\varphi\rangle\|_1^2-1}{(\ln n)^2}\Big),
\end{align}where $C$ is a universal constant (which can be taken $C=e^{-4}$). This estimate gives us Equation (\ref{Estimate extension BRSW}) when we think of pure states with a large projective norm.
In order to obtain (\ref{Estimate extension of BRSW II}), recall that we can assume that our state is
diagonal with non negative coefficients
$|\varphi\rangle=\sum_{i=1}^n\alpha_i|ii\rangle$. Therefore,
considering the same vNms as above (with respect to the basis
$(|i\rangle)_{i=1}^n$) one can check that the quantum winning
probability is greater than or equal to
\begin{align}\label{KV quantum}
\mathbb{E}_z\Big[\frac{1}{n^2}\frac{n}{2^n}\sum_{[x]}\sum_{a\in [x]}\sum_{i,j=1}^n\alpha_i\alpha_j (-1)^{a(i)}(-1)^{a(j)}(-1)^{a(i)+z(i)}(-1)^{a(j)+z(j)}\Big]\\ \nonumber
=\frac{1}{n}\mathbb{E}_z\Big[\sum_{i,j=1}^n\alpha_i\alpha_j (-1)^{z(i)}(-1)^{z(j)}\Big]=\frac{1}{n}\sum_{i=1}^n\alpha_i^2+\frac{1}{n}\sum_{i\neq j}\alpha_i\alpha_j\mathbb{E}_z \Big[(-1)^{z(i)+z(j)}\Big]\\ \nonumber=\frac{1}{n}\sum_{i=1}^n\alpha_i^2+\frac{1}{n}\sum_{i\neq j}\alpha_i\alpha_j(1-2\eta)^2=\frac{1}{n}+\frac{1}{n}\Big(\big(\sum_{i=1}^n\alpha_i\big)^2-1\Big)(1-2\eta)^2 ,
\end{align}where we have used that $\mathbb{E}_z \Big[(-1)^{z(i)+z(j)}\Big]=(1-2\eta)^2$ is independent of $i,j$ with $i\neq j$.

On the other hand, as we have said before the classical value of
$G_{KV}$ is upper bounded by $n^{-\frac{\eta}{1-\eta}}$. If we
consider $\eta=\frac{1}{2}-\frac{1}{\ln n}\in [0,\frac{1}{2}]$, as
in \cite{BRSW}, we have $n^{-\frac{\eta}{1-\eta}}\leq C \frac{1}{n}$
and the last term in Expression (\ref{KV quantum}) becomes
$\frac{1}{n}+\frac{1}{n}\Big(\big(\sum_{i=1}^n\alpha_i\big)^2-1\Big)(\frac{2}{\ln
n})^2$. Thus, we obtain Equation (\ref{Estimate extension of BRSW II}). 
\begin{remark}\label{general n}
Actually, the KV game is defined for $n=2^l$ with $l$ any natural number. However, an easy modification of the game allows us to state Equation (\ref{Estimate extension of BRSW II}) (so Equation (\ref{Estimate extension BRSW}) too) for a general $n$ with a slight different constant. Indeed, for a given state $|\varphi\rangle$ in dimension $n$ we define $l_0=\max\{l:2^l\leq n\}$. Then, we can consider the KV game in dimension $m=2^{l_0}$ and artificially add an extra $m+1$ output for Alice and Bob so that the predicate function of the game is always zero for these new values. Then, the only difference in the classical value of the game is that we must optimize over all families of non negative numbers $\big(P(a|x)\big)_{x,a}$, $\big(Q(b|y)\big)_{y,b}$ such that $\sum_aP(a|x)\leq 1$ for every $x$ and $\sum_bQ(b|y)\leq 1$ for ever $y$. However, since all coefficients of the game are positive it is trivial to deduce that the optimum families will verify equality in the previous expressions. Therefore, the classical value of the new game is exactly the classical value of the KV game in dimension $m$. On the other hand, for every $a\in\{0,1\}^m$ we can define the vector $|u_a\rangle\in \C^n$, by $u(i)=\frac{(-1)^{a(i)}}{\sqrt{m}}$ if $1\leq i\leq m$ and $u(i)=0$ otherwise. Then, the same calculation as above shows that if we consider the quantum probability distribution $Q$ constructed with the state $|\varphi\rangle$ and the von Neumann measurements defined as$$\Big\{\big(P_{[x]}^a=|u_a\rangle\langle u_a|\big)_{a\in [x]}, P_{[x]}^{m+1}=\uno_{M_n}-\sum_{a\in [x]}|u_a\rangle\langle u_a|\Big\}   \text{      } \text{      }\text{(and similar for Bob),}$$we obtain that $$\langle G_{KV},Q\rangle=\frac{1}{m}\Big(\sum_{i=1}^m\alpha_i^2\Big)\Big(1-\big(1-2\eta\big)^2\Big)+\frac{1}{m}\Big(\sum_{i=1}^m\alpha_i\Big)^2\big(1-2\eta\big)^2.$$ Then, considering $\eta=\frac{1}{2}-\frac{1}{\ln m}$ and using that $n\geq m=2^{l_0}\geq \frac{n}{2}$ we recover the same estimates as in (\ref{Estimate extension of BRSW II}) with a slight modification in the constant.
\end{remark}
Since we are looking for a good measure of
$LV_{|\varphi\rangle}$ for a general pure state $|\varphi\rangle$,
we must be careful about giving lower bounds depending on the rank
(or dimension) of the state. Indeed, in many cases this can distort
the essence of a state. With the computations above and the same
ideas as in Remark \ref{general n} it is easy to see that one can
give the following better lower bound for the largest Bell violation
of a pure state $|\varphi\rangle=\sum_{i=1}^n\alpha_i|ii\rangle$,
\begin{align*}
LV_{|\varphi\rangle}\geq C\sup_{k=1,\cdots, n}\Big(\frac{\sum_{i=1}^k\alpha_i}{\ln k}\Big)^2,
\end{align*}where $C$ is a universal constant.

Note that for the maximally entangled state we obtain $LV_{|\psi_n\rangle}\geq C\frac{n}{(\ln n)^2}$ as it is stated in Theorem \ref{Theorem BRSW I} (resp. Theorem \ref{Theorem BRSW II}). The previous study shows the projective tensor norm as a good candidate to measure the largest Bell violation attainable by a (pure) state. In fact, regarding the previous results, one can wonder whether we can improve Buhrman's et al result to obtain
\begin{align*}
LV_{|\psi_n\rangle}\geq C n.
\end{align*}
Our Theorem \ref{Main I} shows that this is not possible. That is, we
cannot completely remove the logarithmic factor in Theorem \ref{Theorem BRSW I}.  Theorem \ref{Main I} (joint with Theorem \ref{Theorem BRSW I}) clarifies the asymptotic behavior of the largest Bell violation of the maximally entangled state up to the order of the logarithmic factor:
\begin{align*}
C\frac{n}{(\ln n)^2}\leq LV_{|\psi_n\rangle}\leq D\frac{n}{\sqrt{\ln n}}.
\end{align*}
We must mention that when we restrict to the easier case of von
Neumann measurements (vNms) rather than general POVMs one can
improve the upper bound in Theorem \ref{Main I} to obtain
$O(\frac{n}{\ln n})$. Indeed, following Werner's construction
(\cite{Werner}), in \cite{APBTA} the authors showed that for certain
$p\geq K\frac{\ln n}{n}$ (with $K\geq 0.8$) the state
$$\xi_p=p|\psi_n\rangle\langle\psi_n|+ (1-p)\frac{\uno}{n^2}$$ is
\emph{vNm-local}; that is, one can construct a local hidden variable
model to describe any quantum probability distribution
$\big(tr(P_x^a\otimes Q_y^b\xi_p)\big)_{x,y}^{a,b}$ constructed with
vNms $\{P_x^a\}$, $\{Q_y^b\}$. Here, we denote by $\frac{\uno}{n^2}$
the maximally mixed state. Since $\frac{\uno}{n^2}$ is a separable
state, one immediately deduces that
\begin{align*}
LV_{|\psi_n\rangle}^{vN}\leq K'\frac{n}{\ln n},
\end{align*}where $LV_{|\psi\rangle}^{vN}$ denotes the measure $LV_{|\psi\rangle}$ restricted to quantum probability distributions constructed by applying vNms on the state $|\psi\rangle$, and $K'\leq\frac{5}{2}$. We must mention, however, that restricting to vNms, though very natural from a physical point of view, simplifies very much the geometry of the problem. Actually, the best estimate in \cite{APBTA} for $p$ verifying that
$\xi_p$ is \emph{local} (with general POVMs) is
$\Omega(\frac{1}{n})$, which leads to an estimate
$LV_{|\psi_n\rangle}\leq D n$. It is also worth mentioning that the
KV game can be used to improve the upper bound estimates in
\cite{APBTA}. Indeed, since the quantum strategy used in Theorem
\ref{Theorem BRSW I} is constructed with vNms acting on the maximally
entangled state in dimension $n$, we immediately conclude that
$\frac{(\ln n)^2}{n}$ is an upper bound for the value $p_L^\phi$
considered in \cite{APBTA}.
\section{A relaxation of the problem}\label{Section: Relaxation}
Let us consider the following SDP relaxation for the quantum value of a $2P1R$-game $G$ with $N$
questions and $K$ answers, which optimizes over families of real
vectors $\{u_x^a\}_{x,a=1}^{N,K}$, $\{v_y^b\}_{y,b=1}^{N,K}$:
\begin{align}\label{SDP}
\begin{array}{lll}
                 SDP(G):= & \text{\textbf{Maximize:}} & \Big|\sum_{x,y;a,b=1}^{N,K}G_{x,y}^{a,b}\langle u_x^a, v_y^b\rangle\Big|\\
                 &  \text{\textbf{Subject to:}}  & \forall x, \forall a\neq a',\langle u_x^a,u_x^{a'}\rangle=0 \text{    }\text{  and   }\text{    }\forall y, \forall b\neq b',\langle v_y^b,v_y^{b'}\rangle=0, \\
                 &            & \forall x, \|\sum_au_x^a\|= 1 \text{    }\text{  and   }\text{    }\forall y, \|\sum_bv_y^b\|= 1.
                           \end{array}\end{align}
The orthogonality restriction in (\ref{SDP}) comes from the fact that any quantum
probability distribution $Q\in \mathcal Q$ can be written by using von Neumann measurements.  However, this process involves an increase
in the dimension of the Hilbert spaces. Therefore, this constrain \emph{is not natural} when one is interested in studying the dimension of the considered quantum states; as we are in this work.  Furthermore, since we are interested here in fixing the dimension of our quantum states $\rho$, we would like to truncate the previous SDP relaxation by
requiring the families of vectors $\{u_x^a\}_{x,a=1}^{N,K}$, $\{v_y^b\}_{y,b=1}^{N,K}$ to have a fixed dimension $n$. However, this is not possible in general if we want to preserve 
the orthogonality restrictions, since this restriction implies that $K$ must be smaller than or equal to $n$ (while we are typically interested in the opposite case). In order to save this problem we will consider the following optimization problem, which optimizes over families of real vectors
$\{u_x^a\}_{x,a=1}^{N,K}$, $\{v_y^b\}_{y,b=1}^{N,K}$:
\begin{align}\label{OPinfty}
\begin{array}{lll}
                 \omega_{OP_\infty}(G):= & \text{\textbf{Maximize:}} & \Big|\sum_{x,y,a,b=1}^{N,K}G_{x,y}^{a,b}\langle u_x^a, v_y^b\rangle\Big|\\
                    &  \text{\textbf{Subject to:}}  & \forall x, \sup_{\alpha_x^a=\pm 1}\big\|\sum_{a=1}^K\alpha_x^a u_x^a\big\|\leq 1, \\
                 &            & \forall y, \sup_{\alpha_y^b=\pm 1}\big\|\sum_{b=1}^K\alpha_y^b v_y^b\big\|\leq 1.
                           \end{array}\end{align}
Then, it is very easy to see that $\omega_{OP_\infty}(G)$ is a
relaxation for the problem of computing the quantum value of a
$2P1R$-game $G$ and it verifies that $SDP(G)\leq
\omega_{OP_\infty}(G)$ for every $G$. The value
$\omega_{OP_\infty}(G)$ is a natural generalization of $SDP(G)$
which removes the orthogonality restriction and so, it admits
restrictions in the dimension of the vectors. We will call
$\omega_{OP_n}(G)$ the value of the previous optimization problem
with the extra restriction: $u_x^a, v_y^b\in \R^n$ for every
$x,y,a,b$.
Then, we have
\begin{theorem}\label{Main III}
\begin{align*}
\omega_{OP_n}(G)\leq D\frac{n}{\sqrt{\ln n}}\omega(G)
\end{align*}for every $2P1R$-game $G$, where $D$ is a universal constant.
\end{theorem}
We think that Theorem \ref{Main III} and Theorem \ref{Main II}  can be of independent interest for computer scientists.
$\omega_{OP_n}$ is the natural generalization of $SDP$ when we want
to impose ``low dimensional solutions''(where orthogonality
restrictions no longer make sense since we will have $K> n$). As far
as we know the question of rounding low-dimensional solutions of
these kinds of optimization problems has not received much
attention. Some interesting papers in this direction are
\cite{AvZw}, \cite{BFV}, \cite{BFVII}.

Since in this paper we want to work in the general context of Bell
inequalities (rather than restricting to the specific case of
$2P1R$-games) we have to consider a modification of the definition
of $\omega_{OP_\infty}$ (resp. $\omega_{OP_n}$). Indeed, the
non-signaling condition verified by the classical and quantum
probability distributions plays an important role in this case and
one has to impose an extra restriction in the definition of $\omega_{OP_n}$ to avoid trivial cases where
$\omega(M)=0$ and $\omega_{OP_n}(M)> 0$, which makes not possible
any result like Theorem \ref{Main II} (see \cite[Section 5]{JP} for
a complete study on the geometry of the problem). Then, for a given
Bell inequality $M\in \mathcal M^{N,K}$, we consider the optimization problem presented in the introduction, which optimizes over families of real $n$
dimensional vectors $\{u_x^a\}_{x,a=1}^{N,K}$,
$\{v_y^b\}_{y,b=1}^{N,K}$, $z$:
\begin{align*}
\begin{array}{lll}
                 \overline{\omega}_{OP_n}(M):= & \text{\textbf{Maximize:}} & \Big|\sum_{x,y,a,b=1}^{N,K}M_{x,y}^{a,b}\langle u_x^a, v_y^b\rangle\Big|\\
                    &  \text{\textbf{Subject to:}}   & \forall x,y, \sum_{a=1}^K u_x^a=\sum_{b=1}^K v_y^b=z \text{   }  (*),\\
                     &            & \forall x, \sup_{\alpha_x^a=\pm 1}\big\|\sum_{a=1}^K\alpha_x^a u_x^a\big\|\leq 1, \\
                 &            & \forall y, \sup_{\alpha_y^b=\pm 1}\big\|\sum_{b=1}^K\alpha_y^b v_y^b\big\|\leq 1.
                           \end{array}\end{align*}
We re-state here our main Theorem \ref{Main II}.
\begin{theorem*}
For all natural numbers $n$, $N$, $K$ and every $M\in \mathcal M^{N,K}$ we have
\begin{align*}
\overline{\omega}_{OP_n}(M)\leq D\frac{n}{\sqrt{\ln n}}\omega(M),
\end{align*}where $D$ is a universal constant.
\end{theorem*}
The proofs of Theorem \ref{Main II} and Theorem \ref{Main III} are the same, but in the first case we have the extra difficulty of restricting to a certain ``affine subspace'' described by condition (*). In particular, Theorem \ref{Main III} can be obtained by following exactly the same proof as the one we will give for Theorem \ref{Main II} with obvious modifications. We will postpone the proof of Theorem \ref{Main II} to Section \ref{Section: proofs} and we will finish this section by showing how to obtain Theorem \ref{Main I} from Theorem \ref{Main II}.
\begin{proof}[Proof of Theorem \ref{Main I}]
Let us consider a Bell inequality $M\in \mathcal M^{N,K}$ such that $\omega(M)\leq 1$. We must show that
\begin{align*}
\Big|\sum_{x,y;a,b=1}^{N,K}M_{x,y}^{a,b}Q_{x,y}^{a,b}\Big|\leq D\frac{n}{\sqrt{\ln n}}
\end{align*}for every $Q\in \mathcal Q_{|\psi_n\rangle}^{N,K}$. By definition $Q_{x,y}^{a,b}=tr\big(E_x^a\otimes F_y^b|\psi_n\rangle\langle\psi_n|\big)$ for every $x,y,a,b$, where $\{E_x^a\}_{x,a}$ and  $\{F_y^b\}_{y,b}$ are POVMs in dimension $n$ and $|\psi_n\rangle=\frac{1}{\sqrt{n}}\sum_{i=1}^n|ii\rangle$. Then, we can write
\begin{align*}
tr\big(E_x^a\otimes F_y^b|\psi_n\rangle\langle\psi_n|\big)=\frac{1}{n}\sum_{i,j=1}^nE_x^a(i,j)F_y^b(i,j)=\frac{1}{n}\sum_{i,j=1}^n(E_x^a)^t(j,i)F_y^b(i,j)
=\frac{1}{n}tr\big((E_x^a)^tF_y^b\big),
\end{align*}where $^t$ denotes the transpose.
Therefore,
\begin{align}\label{basic}
\Big|\sum_{x,y;a,b}M_{x,y}^{a,b}Q_{x,y}^{a,b}\Big|= \Big |\frac{1}{n}\sum_{i=1}^n\sum_{x,y;a,b}M_{x,y}^{a,b}\big((E_x^a)^tF_y^b\big)(i,i)\Big| \nonumber=\Big |\frac{1}{n}\sum_{i=1}^n\sum_{x,y;a,b}M_{x,y}^{a,b}\langle u_x^{a,i},v_y^{b,i}\rangle\Big|\\\leq \sup_{i=1,\cdots,n}\Big |\sum_{x,y;a,b}M_{x,y}^{a,b}\langle u_x^{a,i},v_y^{b,i}\rangle\Big|,
\end{align}where $|u_x^{a,i}\rangle=\overline{E_x^a}|i\rangle$ for every $x,a,i$ and $|v_y^{b,j}\rangle=F_y^b|j\rangle$ for every $y,b,j$.

Note that for a fixed $i=1,\cdots, n$, we trivially have
\begin{align}\label{prop1}
\sum_{a=1}^Ku_x^{a,i}=\sum_{b=1}^Kv_y^{b,i}=|i\rangle
\end{align}for every $x,y$. Furthermore, for every $x$ and every $(\alpha_a)_{a=1}^K\in\{-1,1\}^K$, we have that
\begin{align}\label{prop2}
\Big\|\sum_{a=1}^K\alpha_au_x^{a,i}\Big\|=\Big\|\sum_{a=1}^K\alpha_a\overline{E_x^a}|i\rangle \Big\|\leq \Big\|\sum_{a=1}^K\alpha_a\overline{E_x^a}\Big\|_{M_n}\leq 1;
\end{align}and analogously for the $v_y^{b,i}$'s, where the last inequality follows from the fact that $\{E_x^a\}_{a}$ is a POVM.
Therefore, for every $i=1,\cdots,n$ the families of $n$-dimensional (possibly complex) vectors $\{u_x^{a,i}\}_{x,a}$ and $\{v_y^{b,i}\}_{y,b}$ verify the conditions in (\ref{restrictions}). The only thing left to do is to show that these vectors can be assumed to be real. Indeed, if this is true, we can apply Theorem \ref{Main II} to conclude
\begin{align*}
\Big|\sum_{x,y;a,b}M_{x,y}^{a,b}Q_{x,y}^{a,b}\Big|\leq \sup_{i=1,\cdots,n}\Big |\sum_{x,y;a,b}M_{x,y}^{a,b}\langle u_x^{a,i},v_y^{b,i}\rangle\Big|\leq D\frac{n}{\sqrt{\ln n}}.
\end{align*}
We can assume the families $\{u_x^{a,i}\}_{x,a}$ and $\{v_y^{b,i}\}_{y,b}$ to be formed by real vectors by replacing $n$ with $2n$ (which means just a slight modification in the constant $D$). To see this we note that Equation (\ref{basic}) can be read as
\begin{align*}
\Big|\sum_{x,y;a,b}M_{x,y}^{a,b}Q_{x,y}^{a,b}\Big|=\Big|Re\Big(\sum_{x,y;a,b}M_{x,y}^{a,b}Q_{x,y}^{a,b}\Big)\Big|\leq \sup_{i=1,\cdots,n}\Big |\sum_{x,y;a,b}M_{x,y}^{a,b}Re(\langle u_x^{a,i},v_y^{b,i}\rangle)\Big|,
\end{align*}where $Re(z)$ denote the real part of $z$. On the other hand, if we define the vectors $\widetilde{u}_x^{a,i}=Re(u_x^{a,i})\oplus Im(u_x^{a,i})\in \R^{2n}$ and $\widetilde{v}_y^{b,j}=Re(v_y^{b,j})\oplus Im(v_y^{b,j})\in \R^{2n}$, we obtain new real vectors verifying $\langle \widetilde{u}_x^{a,i},\widetilde{v}_y^{b,i}\rangle=Re(\langle u_x^{a,i},v_y^{b,i}\rangle)$ for every $x,a$ and also conditions (\ref{prop1}) and  (\ref{prop2}).
So we have done.
\end{proof}
\section{proof of the main result}\label{Section: proofs}
\subsection{Proof of Theorem \ref{Main II}}\label{proof main}
The proof of Theorem \ref{Main II} will follow the same lines as
\cite[Theorem 18]{JP}. However, we will present here a simpler
approach to the problem avoiding, in particular, the use of
\cite{BCLT} (via \cite[Theorem 19]{JP}). Our proof relies
on Lemma \ref{concentration} proven below. In order to make the
proof of Theorem \ref{Main II} completely understandable for every
reader, we will start by introducing a few definitions and basic
results. In the following we will denote by $\ell_2^n$ the space
$\R^n$ with the Euclidean norm\footnote{Note that we used this
notation so far to denote the complex
$n$-dimensional Hilbert space. Here, we will restrict to real
spaces.} and by
$\ell_\infty^n$ the space $\R^n$ with the $\sup$-norm. We will denote by
$\ell_2$ and $\ell_\infty$ the corresponding infinite dimensional
spaces.
On the other hand, given a linear map $T:X\rightarrow Y$ between two finite dimensional normed spaces, we will denote the norm of $T$ by $$\|T\|:=\sup_{x\in B_X}\|T(x)\|_Y,$$where $B_X$ is the unit ball of $X$. Note that for a linear map $T:\ell_\infty^K\rightarrow \ell_2^n$ defined as $T(|a\rangle)=|u_a\rangle$ for every $a$, where $(|a\rangle)_{a=1}^K$ denotes the standard basis in $\R^K$, we have
\begin{align}\label{norm infty-two}
\|T\|=\sup_{(\alpha_a)_a\in \{-1,1\}^K}\Big\|\sum_{a=1}^K\alpha_a|u_a\rangle\Big\|_{\ell_2^n}=
\sup_{\sum_{i=1}^n|\beta_i|^2=1}\sum_{a=1}^K\Big|\sum_{i=1}^n\beta_i\langle u_a|i\rangle\Big|.
\end{align}
In the particular case where $T:\ell_2^n\rightarrow Y$, we will be also interested in the following norm of $T$,
\begin{align*}
\ell(T):=\mathbb{E}\Big(\Big\|\sum_{i=1}^ng_iT(|i\rangle)\Big\|_Y^2\Big)^\frac{1}{2},
\end{align*}where $(|i\rangle)_{i=1}^n$ denotes the standard basis of $\ell_2^n$ and $(g_i)_{i=1}^n$ is a sequence of independent normalized real random Gaussian variables.
An easy computation shows
\begin{align}\label{l-norm T}
\ell(T)=\mathbb{E}\Big(\big\|T\big(\sum_{i=1}^ng_i|i\rangle\big)\big\|_Y^2\Big)^\frac{1}{2}\leq \|T\|\mathbb{E}\Big(\Big\|\sum_{i=1}^ng_i|i\rangle\Big\|_{\ell_2^n}^2\Big)^\frac{1}{2} \leq \sqrt{n}\|T\|
\end{align}for every linear map $T:\ell_2^n\rightarrow Y$.
According to Kahane- Khinchin inequality (see for instance \cite{Tomczak}, pp 16) we know that
\begin{align}\label{Kahane-Khinchin}
\ell(T)\leq
K_{1,2}\mathbb{E}\Big\|\sum_{i=1}^ng_iT(|i\rangle)\Big\|_Y
\end{align}for every $T:\ell_2^n\rightarrow Y$, where $K_{1,2}$ is a universal constant.

Finally we will introduce a third norm for a given linear map $T:X\rightarrow Y$. We say that $T$ is \emph{2-summing} if there exists a constant $C\geq 0$ such that for every $N\in \N$ and every sequence of elements $x_1,\cdots ,x_N$ in $X$ the following inequality holds:
\begin{align}\label{2-summing}
\Big(\sum_{i=1}^N\|T(x_i)\|_Y^2\Big)^\frac{1}{2}\leq C \sup_{x^*\in B_{X^*}}\Big(\sum_{i=1}^N|x^*(x_i)|^2\Big)^\frac{1}{2}.
\end{align}In this case we define the 2- summing norm of $T$ as $\pi_2(T):=\inf \{C: C \text{  verifies  } (\ref{2-summing})\}.$ A particularly simple case is when $T:\ell_\infty\rightarrow \ell_\infty$ is a diagonal map defined by a sequence $(\lambda_i)_{i=1}^\infty$ (that is, $T(|i\rangle)=\lambda_i|i\rangle$ for every $i$). In this case, one can see that $\pi_2(T)=\|(\lambda_i)_{i=1}^\infty\|_2$. It is also easy to verify from its definition that the 2-summing operators form an operator ideal. In particular, for all linear maps between Banach spaces $T:X\rightarrow Y$, $S:Y\rightarrow Z$ and $Q:Z\rightarrow W$ we have that $\pi_2(Q\circ S\circ T)\leq \|Q\|\pi_2(S)\|T\|$.
\emph{Grothendieck inequality} has been already used in several problems of quantum information theory (see \cite{Pisiersurvey} for a complete survey of the topic). As an immediate consequence of Grothendieck inequality we deduce that for every linear map $T:\ell_\infty\rightarrow \ell_2$ we have
\begin{align}\label{Gro}
\pi_2(T)\leq K_G\|T\|,
\end{align}where $K_G$ is the \emph{Grothendieck constant}, which is known to verify $K_G< 1.78$\footnote{Actually, to state inequality $(\ref{Gro})$ it is enough to invoke the \emph{little Grothendieck theorem} which gives us a constant $\sqrt{\frac{\pi}{2}}$.}. The following inequality will be very helpful in the proof of Lemma \ref{concentration}. Let $a:\ell_2\rightarrow \ell_\infty$ and $b:\ell_\infty\rightarrow \ell_2$ be two linear maps, then \begin{align}\label{trece+gro}
\big|tr(b\circ a)|\leq \pi_2(b)\pi_2(a)\leq K_G\|b\|\pi_2(a).
\end{align}Here, the first inequality is a consequence of \emph{trace duality} (see for instance \cite{DeFl}) and the second one follows from Equation (\ref{Gro}).

The following lemma will be crucial in the proof of Theorem \ref{Main II}.
\begin{lemma}\label{concentration}
For all natural numbers $n,N\in \mathbb{N}$ and all linear maps $S:\ell_2^n\rightarrow \ell_\infty^N$ and $T:\ell_\infty^N\rightarrow \ell_2^n$ we have
\begin{align*}
\big|tr(T\circ S)\big|\leq C\sqrt{\frac{n}{\ln n}}\|T\|\ell(S),
\end{align*}where $C$ is a universal constant.
\end{lemma}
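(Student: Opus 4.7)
The plan is to use trace duality to reduce the lemma to a spectral bound on $\pi_2(S)$, and then to extract the $\sqrt{\ln n}$ saving via a Sudakov-type minoration argument.

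First I would apply the trace-duality inequality (\ref{trece+gro}) with $a=S$ and $b=T$, combined with Grothendieck's inequality (\ref{Gro}) applied to $T\colon \ell_\infty^N\to \ell_2^n$, to obtain
\[
\bigl|\tr(T\circ S)\bigr| \;\leq\; \pi_2(T)\,\pi_2(S) \;\leq\; K_G\,\|T\|\,\pi_2(S).
\]
So everything reduces to showing that for any $S\colon \ell_2^n\to\ell_\infty^N$,
\[
\pi_2(S) \;\leq\; C\,\sqrt{n/\ln n}\;\ell(S). \qquad(\ast)
\]

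To prove $(\ast)$ I would use the Pietsch/SDP-duality formula: writing $S(x)=(\langle s_i,x\rangle)_{i=1}^N$ with $s_i\in\R^n$ and using the usual minimax on $\min_{A\succeq 0,\,\tr A=1}\max_i s_i^TA^{-1}s_i$, one has
\[
\pi_2(S)^2 \;=\; \max_p \bigl(\tr(M_p^{1/2})\bigr)^2,\qquad M_p:=\sum_i p_i\, s_i s_i^T,
\]
where $p$ ranges over probability distributions on $[N]$. Fix such a $p$; let $\lambda_1\geq\cdots\geq\lambda_n\geq 0$ be the eigenvalues of $M_p$, and introduce the effective rank $r:=(\tr M_p^{1/2})^2/\tr M_p\in[1,n]$. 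The core step will be to establish the spectral lower bound
\[
\ell(S)^2 \;\geq\; c\,(\ln r)\,\tr(M_p) \qquad(\dagger)
\]
for a universal $c>0$. Given $(\dagger)$, one gets $(\tr M_p^{1/2})^2=r\,\tr M_p\leq (r/c\ln r)\,\ell(S)^2\leq (n/c\ln n)\,\ell(S)^2$ by monotonicity of $x\mapsto x/\ln x$ on $[e,\infty)$, with the regime $r\leq e$ handled trivially from $(\tr M_p^{1/2})^2\leq r\,\tr M_p\leq e\,\ell(S)^2$ (using $\tr M_p\leq \|S\|^2\leq \ell(S)^2$). Taking the maximum over $p$ then closes the argument.

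The main obstacle is $(\dagger)$. The intuition is that a large effective rank forces the $p$-weighted set $\{s_i\}$ to populate many directions in $\R^n$, so the Gaussian process $X_i=\langle s_i,G\rangle$ carries enough complexity for Sudakov's minoration to supply the missing $\sqrt{\ln r}$ factor. Concretely I would proceed by (i) a dyadic decomposition of the spectrum of $M_p$ that reduces to an essentially flat profile $\lambda_k\approx \tr(M_p)/r$ on an $r$-dimensional eigenspace; (ii) an application of Kahane--Khinchin (\ref{Kahane-Khinchin}) to move between $(\mathbb{E}\|SG\|_\infty^2)^{1/2}$ and $\mathbb{E}\|SG\|_\infty$; and (iii) a volumetric/packing argument inside that eigenspace producing $r$ suitably separated vectors among the $s_i$'s, to which Sudakov's inequality then applies. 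The delicate point is carrying out the reduction to the flat case uniformly in the spectral profile of $M_p$ without accumulating spurious logarithmic losses, which is precisely where the $\sqrt{n/\ln n}$-factor (and not merely $\sqrt{n}$) comes from.
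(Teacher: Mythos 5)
Your reduction via trace duality is fine: inequality (\ref{trece+gro}) with $a=S$, $b=T$ gives $|\operatorname{tr}(T\circ S)|\leq K_G\|T\|\pi_2(S)$, the Pietsch/minimax identity $\pi_2(S)^2=\max_p\bigl(\operatorname{tr}(M_p^{1/2})\bigr)^2$ is correct, and the target $(\ast)$ is in fact a true statement (it follows from the factorization $S=B\circ D\circ A$ through the diagonal operator $D(|k\rangle)=|k\rangle/\sqrt{\ln(k+1)}$ that the paper extracts from Ledoux--Talagrand). The fatal problem is that your core step $(\dagger)$ is false. Take $n=N+1$, let $v,w_1,\dots,w_N$ be orthonormal in $\R^n$, let the rows of $S$ be the $2N$ vectors $s_i^{\pm}=v\pm\epsilon w_i$ with $\epsilon=1/\sqrt{\ln N}$, and let $p$ be uniform (by symmetry and concavity of $M\mapsto\operatorname{tr}M^{1/2}$ this is a maximizing $p$). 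Then $M_p=vv^T+\frac{\epsilon^2}{N}\sum_iw_iw_i^T$ exactly, with eigenvalues $1$ and $\epsilon^2/N$ (the latter $N$ times), so $\operatorname{tr}M_p=1+\epsilon^2$ while $r=(1+\epsilon\sqrt N)^2/(1+\epsilon^2)\approx N/\ln N$; hence $(\dagger)$ demands $\ell(S)^2\geq c\ln r\gtrsim \ln N$. But
\[
\ell(S)^2=\mathbb{E}\,\max_{i,\pm}\bigl|\langle v,G\rangle\pm\epsilon\langle w_i,G\rangle\bigr|^2\leq 2\,\mathbb{E}\langle v,G\rangle^2+2\epsilon^2\,\mathbb{E}\max_i\langle w_i,G\rangle^2\leq 2+C'\epsilon^2\ln N=O(1).
\]
The point is that $\operatorname{tr}M_p$ and the effective rank can be carried by entirely different parts of the spectrum (here the trace sits on the spike $v$ and the rank on the tiny bulk), so no single-scale Sudakov minoration can couple $\ln r$ to $\operatorname{tr}M_p$. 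Your sketch (i)--(iii) would at best produce $\ell(S)^2\gtrsim(\ln r_j)\,\tau_j$ separately for each dyadic spectral block of trace $\tau_j$, which is strictly weaker than $(\dagger)$.

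What you would actually need is the multi-scale statement $\ell(S)^2\geq c\sum_k\lambda_k\ln(k+1)$ for the ordered eigenvalues $\lambda_1\geq\lambda_2\geq\cdots$ of $M_p$; Cauchy--Schwarz then yields
\[
\Bigl(\sum_{k=1}^{n}\sqrt{\lambda_k}\Bigr)^2\leq\Bigl(\sum_{k=1}^{n}\frac{1}{\ln(k+1)}\Bigr)\Bigl(\sum_{k=1}^{n}\lambda_k\ln(k+1)\Bigr)\leq C\frac{n}{\ln n}\,\ell(S)^2,
\]
which is exactly $(\ast)$. But establishing that multi-scale lower bound is essentially the content of the Ledoux--Talagrand decomposition (\cite[Theorem 12.10]{LeTa}) on which the paper's proof rests: each coordinate $X_t$ of $S(G)$ is written as a subconvex combination of Gaussians $Y_k$ with $\|Y_k\|_2\lesssim\ell(S)/\sqrt{\ln(k+1)}$, which is precisely what separates the spectrum scale by scale. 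So the proposal, as written, does not prove the lemma, and repairing it leads back to the paper's main tool rather than around it.
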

The key point in the proof of Lemma \ref{concentration} is a nice
consequence of the concentration of measure phenomenon given by
Ledoux and Talagrand. It has already been used in the study of
cotype constants in Banach space theory. In particular, we develop
here some ideas from \cite{JuGe}.
\begin{proof}
According to \cite[Theorem 12.10]{LeTa} applied to the Gaussian process $X_t=\sum_{i=1}^ng_i \langle t|S|i\rangle$, $t=1,\cdots, N$, there exists a Gaussian sequence $(Y_k)_{k\geq 1}$ with $\|Y_k\|_2\leq C\frac{\ell(S)}{\sqrt{\ln (k+1)}}$ for every $k\geq 1$ and such that for every $t=1,\cdots, N$ we have $$X_t=\sum_{k\geq 1}\alpha_k(t)Y_k,$$where $\alpha_k(t)\geq 0$, $\sum_{k\geq 1}\alpha_k\leq 1$ and the series converges almost surely in $L_2$. Then, for every $k\geq 1$ we can define $u_k=\sqrt{\ln(k+1)}\sum_{i=1}^n\langle Y_k, g_i\rangle |i\rangle\in \ell_2^n$, $v_k=\sum_{t=1}^N\alpha_k(t)|y\rangle\in \ell_\infty^N$ and the previous properties guarantee that $\|u_k\|\leq C\ell(S)$ and $\|v_k\|\leq 1$ for every $k$. Let us consider now the linear maps $A:\ell_2^n\rightarrow \ell_\infty$, $D:\ell_\infty\rightarrow \ell_\infty$ and $B:\ell_\infty\rightarrow \ell_\infty^N$ define by $A(|i\rangle)=\sum_{k\geq 1}\langle u_k|i\rangle |k\rangle$ for every $i=1,\cdots ,n$; $D(|k\rangle)=\frac{1}{\sqrt{\ln(k+1)}}|k\rangle$ for every $k\geq 1$ and $B(|k\rangle)=v_k$ for every $k\geq 1$ respectively. Cauchy-Schwartz inequality implies that $\|A\|\leq C'\ell(s)$, whereas it is easy to check that $\|D\|\leq 1$ and $\|B\|\leq 1$. Furthermore, the following factorization holds: $$S=B\circ D\circ A.$$Following \cite[Lemma 3.3]{JuGe} we write $D=D_1+D_2$ where $D_1$ is the diagonal operator associated to the sequence $\tilde{D}_1=(\frac{1}{\sqrt{\ln 2}},\cdots , \frac{1}{\sqrt{\ln (n+1)}},0,0, \cdots)$. Then, we have
\begin{align*}
|tr(T\circ S)|=|tr(T\circ B\circ D\circ A)|\leq |tr(T\circ B\circ D_1\circ A)|+ |tr(T\circ B\circ D_2\circ A)|.
\end{align*}Now, according to Equation (\ref{trece+gro}) and the ideal property of 2-summing operators we have
\begin{align*}
\big|tr(T\circ B\circ D_1\circ A)\big|\leq K_G \pi_2(D_1\circ A)\|T\circ B\|\leq K_G \pi_2(D_1)\|A\|\|T\|\|B\|\leq C'' \sqrt{\frac{n}{\ln n}}\ell(S)\|T\|,
\end{align*}where we have used $\pi_2(D_1)=\|\tilde{D}_1\|_2\leq \tilde{C}\sqrt{\frac{n}{\ln n}}$. On the other hand, if we denote $id_n:\ell_2^n\rightarrow \ell_\infty^n$ the identity map, we have
\begin{align*}
\big|tr(T\circ B\circ D_2\circ A)\big|=\big|tr(id_n^{-1}\circ id_n\circ T\circ B\circ D_2\circ A)\big|\leq K_G\pi_2(id_n\circ T\circ B\circ D_2\circ A)\|id_n^{-1}\|\\ \leq K_G\|id_n\|\pi_2(T)\|B\|\|D_2\|\|A\|\|id_n^{-1}\|\leq C'''\sqrt{\frac{n}{\ln n}}\|T\|\ell(S),
\end{align*}where we have used that $\|D_2\|\leq \frac{1}{\sqrt{\ln (n+1)}}$, $\pi_2(T)\leq K_G\|T\|$, $\|id_n\|=1$ and $\|id_n^{-1}\|=\sqrt{n}$.
Therefore, we obtain that
\begin{align*}
|tr(T\circ S)|\leq \tilde{C}\sqrt{\frac{n}{\ln n}}\|T\|\ell(S),
\end{align*}as we wanted.
\end{proof}
\begin{remark}\label{optimality I}
We note that Lemma \ref{concentration} is optimal. Indeed, if we
consider the map $id_n:\ell_2^n\rightarrow \ell_\infty^n$ it is well
known that $\ell(id_n)\leq c\sqrt{\ln n}$ for some universal
constant $c$ and we also have $\|id_n^{-1}\|=\sqrt{n}$. On the other
hand, we trivially have $tr(id_n^{-1}\circ id_n)=n$.
\end{remark}
\begin{lemma}\label{preliminar lemma}
Let $R=(R(x|a))_{x,a=1}^{N,K}$ be a family of real numbers such that $\sum_{a=1}^KR(x|a)=C$ for every $x=1,\cdots, N$, where $C$ is a constant. Let us denote $\Lambda=\sup_{x=1,\cdots ,N}\sum_{a=1}^K|R(x|a)|$. Then, we can write $R=\lambda P_1+ \mu P_2$ such that $P_i\in S(N,K)$ for $i=1,2$ and $|\lambda|+ |\mu|=\Lambda$. Here, we denote $$S(N,K)=\Big\{(P(x|a))_{x,a=1}^{N,K}: P(x|a)\geq 0   \text{    and     }\sum_{a=1,\cdots, K}P(x|a)=1   \text{    for every   } x,a\Big\}.$$
\end{lemma}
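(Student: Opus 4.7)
The plan is to make the correct choice of scalars first and then produce the probability distributions row by row via a Jordan-type decomposition padded by a non-negative ``slack'' that absorbs the defect between $\sum_a|R(x|a)|$ and $\Lambda$.

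\textbf{Step 1 (the scalars).} Since $|C| = |\sum_a R(x|a)| \leq \sum_a|R(x|a)| \leq \Lambda$ for any fixed $x$, I can set
\[
\lambda = \frac{\Lambda + C}{2} \geq 0, \qquad \mu = \frac{C - \Lambda}{2} \leq 0.
\]
Then $\lambda + \mu = C$ and $|\lambda|+|\mu| = \lambda - \mu = \Lambda$, which is the desired normalization. If $\Lambda = |C|$ one of them vanishes, and I will handle that degenerate case at the end.

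\textbf{Step 2 (Jordan decomposition of the rows).} For each $x$ write $R^+(x|a) = \max(R(x|a),0)$ and $R^-(x|a) = \max(-R(x|a),0)$, and put $\alpha_x = \sum_a R^+(x|a)$, $\beta_x = \sum_a R^-(x|a)$. By construction
\[
\alpha_x - \beta_x = \sum_a R(x|a) = C, \qquad \alpha_x + \beta_x = \sum_a |R(x|a)| \leq \Lambda.
\]

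\textbf{Step 3 (building $P_1$ and $P_2$ with a slack).} For each $x$, choose any non-negative numbers $(c_{x,a})_{a=1}^{K}$ with
\[
\sum_{a=1}^{K} c_{x,a} = \frac{\Lambda - (\alpha_x + \beta_x)}{2} \geq 0
\]
(for instance, distribute this mass uniformly on $a$). Define
\[
P_1(x|a) := \frac{R^+(x|a) + c_{x,a}}{\lambda}, \qquad P_2(x|a) := \frac{R^-(x|a) + c_{x,a}}{|\mu|}.
\]
Both are non-negative by design. Using $\alpha_x - \beta_x = C$, a short computation gives $\sum_a P_1(x|a) = \frac{\alpha_x + (\Lambda-\alpha_x-\beta_x)/2}{(\Lambda+C)/2} = 1$ and analogously $\sum_a P_2(x|a) = 1$, so $P_1, P_2 \in S(N,K)$. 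Finally, since $\mu = -|\mu|$ and $R^+(x|a) - R^-(x|a) = R(x|a)$, the correction terms $c_{x,a}$ cancel and
\[
\lambda P_1(x|a) + \mu P_2(x|a) = R^+(x|a) - R^-(x|a) = R(x|a),
\]
as required.

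\textbf{Step 4 (the degenerate case).} If $\Lambda = |C|$, equality in $|\sum_a R(x|a)| \leq \sum_a |R(x|a)|$ forces all $R(x|a)$ to have a common sign (for every $x$); then $R/C$ already lies in $S(N,K)$ and one writes $R = C\cdot(R/C) + 0 \cdot Q$ for any $Q \in S(N,K)$, giving $|\lambda|+|\mu| = |C| = \Lambda$. The only mildly subtle step is Step 3, where one has to verify simultaneously that the proposed $P_i$ are non-negative, that their rows sum to one, and that the slack mass $(\Lambda - \alpha_x - \beta_x)/2$ is non-negative; all three requirements work out precisely because the row-sum constraint $C$ is independent of $x$, which is the hypothesis that makes the lemma true.
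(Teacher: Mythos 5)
Your proof is correct and follows essentially the same route as the paper's: a row-by-row Jordan decomposition normalized by the scalars $(\Lambda+C)/2$ and $(C-\Lambda)/2$, with a non-negative slack added to both $P_1$ and $P_2$ so that it cancels in the recombination $\lambda P_1+\mu P_2$. The only cosmetic differences are that the paper concentrates all the slack in the $K$-th output (setting $P_i(K|x)=1-\sum_{a=1}^{K-1}P_i(a|x)$) whereas you distribute it arbitrarily, and the paper arrives at the same two scalars as $M=\sup_x\sum_{a\in A_x^+}R(x|a)$ and $m=\inf_x\sum_{a\in A_x^-}R(x|a)$ rather than writing them directly in terms of $\Lambda$ and $C$.
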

\begin{proof}
We can assume the constant $C$ to be positive. For every $x$, we denote $$A_x^+=\{a: R(x|a)> 0\} \text{   }\text{ and  }\text{   }A_x^-=\{a: R(x|a)\leq 0\}.$$ Also, we denote $$M=\sup_x\sum_{a\in A_x^+}R(x|a)\text{   }\text{ and  }\text{   }m=\inf_x\sum_{a\in A_x^-}R(x|a).$$Since the case $m=0$ is trivial we can assume that $m< 0$. The fact that $\sum_aR(x|a)=C$ for every $x$ guarantees that the previous $\sup$ and $\inf$ are attained in the same $x$. In particular note that $M+m=C$ and $M-m=\Lambda$. Therefore, we can write $R=MP_1+ mP_2$, where we define, for each $x$: $P_1(a|x)=\frac{R(a|x)}{M}$ for $a\in \{1,\cdots, K-1\}\cap A_x^+$, $P_1(a|x)=0$ for $a\in \{1,\cdots, K-1\}\cap A_x^-$, $P_1(K|x)=1-\sum_{a=1}^{k-1}P_1(a|x)$ and $P_2(a|x)=\frac{R(a|x)}{m}$ for $a\in \{1,\cdots, K-1\}\cap A_x^-$, $P_2(a|x)=0$ for $a\in \{1,\cdots, K-1\}\cap A_x^+$, $P_2(K|x)=1-\sum_{a=1}^{k-1}P_2(a|x)$.
Since $P_1$ and $P_2$ belong to $S(N,K)$ and $|M|+|m|=\Lambda$ we conclude the proof.
\end{proof}
We are now ready to prove our main result.
\begin{proof}[Proof of Theorem \ref{Main II}]
Let us consider an element $M\in \mathcal M^{N,K}$ such that $\omega(M)\leq 1$ and some families of vectors $\{u_x^a\}_{x,a}$ and $\{v_y^b\}_{y,b}$ verifying conditions (\ref{restrictions}). We must show that
\begin{align*}
\Big|\sum_{x,y,a,b}M_{x,y}^{a,b}\langle u_x^a,v_y^b\rangle\Big|\leq D\frac{n}{\sqrt{\ln n}}
\end{align*}for a certain universal constant $D$.

In order to fit Lemma \ref{concentration} in our context we must ``twist'' our Bell inequality $M$ in the spirit of \cite[Section 5]{JP}. For every fixed $y=1,\cdots, N$, we consider the linear maps $$u_y:\ell_2^n\rightarrow \ell_\infty^{K-1}\text{   }\text{   defined by   }\text{    } u_y(|i\rangle)=\sum_{b=1}^{K-1}\sum_{x,a=1}^{N,K}\Big(M_{x,y}^{a,b}-M_{x,y}^{a,K}\Big)u_x^a(i)|b\rangle \text{   }\text{   for every   }1\leq i\leq n$$and $$u_{N+1}:\ell_2^n\rightarrow \ell_\infty^{K-1}   \text{   }\text{   defined by   }\text{    } u_{N+1}(|i\rangle)=\sum_{y=1}^N\sum_{x,a=1}^{N,K}M_{x,y}^{a,K}u_x^a(i)|1\rangle \text{   }\text{   for every   }1\leq i\leq n.$$On the other hand, we will also consider the linear maps $$v_y:\ell_\infty^{K-1}\rightarrow \ell_2^n\text{   }\text{   defined by   }\text{    } v_y(|b\rangle)=v_y^b \text{   }\text{   for every   }1\leq b\leq K-1,$$and$$v_{N+1}:\ell_\infty^{K-1}\rightarrow \ell_2^n   \text{   }\text{   defined by   }\text{    } v_{N+1}(|b\rangle)=\left\{\begin{array}{ccccc}
0 & \text{ if } & |b\rangle\neq |1\rangle\\
\sum_{b=1}^Kv_1^b & \text{ if } & |b\rangle= |1\rangle.
\end{array}\right.$$
Then, trivial computations show that
\begin{align*}
\sum_{x,y;a,b=1}^{N,K}M_{x,y}^{a,b}\langle u_x^a,v_y^b\rangle= \sum_{y=1}^{N+1}tr(v_y\circ u_y).
\end{align*}
Now, according to Equation (\ref{norm infty-two}) and conditions (\ref{restrictions}) we have that $\|v_y\|\leq 1$ for every $y=1,\cdots, N+1$.
Therefore, according to Lemma \ref{concentration} we have
\begin{align*}
\Big|\sum_{x,y;a,b=1}^{N,K}M_{x,y}^{a,b}\langle u_x^a,v_y^b\rangle\Big|\leq \sum_{y=1}^{N+1}\Big|tr(v_y\circ u_y)\Big|\leq C \sqrt{\frac{n}{\ln n}}\sum_{y=1}^{N+1}\ell(u_y).
\end{align*}Our statement will follow then from the estimate
\begin{align}\label{final estimate}
\sum_{y=1}^{N+1}\ell(u_y)\leq 3K_{1,2}\sqrt{n}.
\end{align}
First, according to Equation (\ref{Kahane-Khinchin}) we have
\begin{align*}
\sum_{y=1}^{N}\ell(u_y)\leq
K_{1,2}\sum_{y=1}^{N}\mathbb{E}\Big\|\sum_{i=1}^ng_iu_y(|i\rangle)\Big\|_{\ell_\infty^{K-1}}
=K_{1,2}\sum_{y=1}^{N}\mathbb{E}\sup_{b=1,\cdots,K-1}\Big|\sum_{x,a=1}^{N,K}\Big(M_{x,y}^{a,b}-M_{x,y}^{a,K}\Big)\sum_{i=1}^ng_iu_x^a(i)\Big|.
\end{align*}
Now, let us denote, for every $y=1,\cdots, N$, $b_y\in \{1,\cdots, K-1\}$ the elements where the previous $\sup$ is attained and $\alpha_y=\sign\Big(\sum_{x,a=1}^{N,K}\Big(M_{x,y}^{a,b_y}-M_{x,y}^{a,K}\Big)\sum_{i=1}^ng_iu_x^a(i)\Big)$. Then, defining the element $(R(b|y))_{y,b=1}^{N,K}$ by $R(b|y)=\alpha_y$ if $b=b_y$, $R(K|y)=-\alpha_y$ and $R(b|y)=0$ otherwise, it is very easy to check that
%
%
\begin{align*}
\sum_{y=1}^{N}\mathbb{E}\sup_{b=1,\cdots,K-1}\Big|\sum_{x,a=1}^{N,K}\Big(M_{x,y}^{a,b}-M_{x,y}^{a,K}\Big)\sum_{i=1}^ng_iu_x^a(i)\Big|
=\mathbb{E}\sum_{x,y;a,b=1}^{N,K}M_{x,y}^{a,b}\big(\sum_{i=1}^ng_iu_x^a(i)\big)R(b|y).
\end{align*}
Denoting $Q(a|x)=\frac{1}{\|(g_i)_i\|_2}\sum_{i=1}^ng_iu_x^a(i)$ for every $x,a$\footnote{Actually, we should define $Q_\omega(a|x)$, where $g_i=g_i(\omega)$, for every $\omega$. However, the upper bounds below hold for every $\omega$, so we avoid that notation for simplicity.}, conditions (\ref{restrictions}) and Lemma \ref{preliminar lemma} tell us that $Q=\lambda P_1+\beta P_2$ and $R=\gamma P_3 + \delta P_4$, where $P_i\in S(N,K)$ for $i=1,\cdots, 4$, $|\lambda|+|\beta|\leq 1$ and $|\gamma|+|\delta|\leq 2$. The fact that $\omega(M)\leq 1$ guarantees that
\begin{align}\label{first term}
K_{1,2}\mathbb{E}\sum_{x,y;a,b=1}^{N,K}M_{x,y}^{a,b}\Big(\sum_{i=1}^ng_iu_x^a(i)\Big)R(b|y)=K_{1,2}\mathbb{E}\|(g_i)\|_2\sum_{x,y;a,b=1}^{N,K}M_{x,y}^{a,b}Q(a|x)R(b|y)\leq
2K_{1,2}\sqrt{n}.
\end{align}
On the other hand,
\begin{align*}
\ell(u_{N+1})\leq
K_{1,2}\mathbb{E}\Big|\sum_{x,y;a=1}^{N,K}M_{x,y}^{a,K}\sum_{i=1}^ng_iu_x^a(i)\Big|=K_{1,2}\mathbb{E}\|(g_i)\|_2\Big|\sum_{x,y;a=1}^{N,K}M_{x,y}^{a,b}Q(a|x)S(b|y)\Big|,
\end{align*}where $Q$ is defined as above and for every $y=1,\cdots, N$ we define $S(b|y)=1$ if $b=K$ and $S(b|y)=0$ otherwise. Again, $\omega(M)\leq 1$ implies that
\begin{align}\label{second term}
\ell(u_{N+1})\leq
K_{1,2}\mathbb{E}\|(g_i)\|_2\Big|\sum_{x,y;a=1}^{N,K}M_{x,y}^{a,b}Q(a|x)S(b|y)\Big|\leq
K_{1,2}\sqrt{n}.
\end{align}
Then, Equation (\ref{final estimate}) follows from Equations (\ref{first term}) and (\ref{second term}).
\end{proof}
\subsection{Some comments about the optimality}\label{Some comments about the optimality}
Lemma \ref{concentration} is not only optimal in the sense of Remark \ref{optimality I}, but it
can also be used to give a very simple proof of the optimal estimate $\Omega(\sqrt{\ln n})$ in Theorem \ref{optimal complementation}.
\begin{proof}[Proof of the first part of Theorem \ref{optimal complementation}]
To prove the first part of the statement let us consider linear maps $S:\ell_2^n\rightarrow \ell_1(\ell_\infty)$ and $T:\ell_1(\ell_\infty)\rightarrow \ell_2^n$ such that $T\circ S=id_{\ell_2^n}$. Then, we can realize $S=(S_x)_{x=1}^\infty$ and $T=(T_x)_{x=1}^\infty$ such that the linear maps $S_x:\ell_2^n\rightarrow \ell_\infty$ and $T_x:\ell_\infty\rightarrow \ell_2^n$ are defined by $S(z)=(S_x(z))_{x=1}^\infty$ for every $z\in \ell_2^n$ and $T((y_x)_{x=1}^\infty)=\sum_xT_x(y_x)$ for every $(y_x)_{x=1}^\infty\in \ell_1(\ell_\infty)$. Note that we easily have $\|T_x\|\leq \|T\|$ for every $x$. On the other hand, one can also check that
\begin{align*}
tr(T\circ S)= \sum_{x=1}^\infty tr(T_x\circ S_x).
\end{align*}Furthermore, according to Equation (\ref{Kahane-Khinchin}) we have
\begin{align*}
\sum_x\ell(S_x)\leq K_{1,2} \sum_x\mathbb{E}\Big\|\sum_{i=1}^ng_iS_x(|i\rangle)\Big\|_{\ell_\infty}=K_{1,2}\mathbb{E}\sum_x\Big\|S_x\Big(\sum_{i=1}^ng_i|i\rangle\Big)\Big\|_{\ell_\infty}\\
=K_{1,2}\mathbb{E}\Big\|S\Big(\sum_{i=1}^ng_i|i\rangle\Big)\Big\|_{\ell_1(\ell_\infty)}=K_{1,2}\mathbb{E}\Big\|\sum_{i=1}^ng_iS(|i\rangle)\Big\|_{\ell_1(\ell_\infty)}\leq
K_{1,2}\ell(S).
\end{align*}
Therefore, we have
\begin{align*}
n=tr(id_{\ell_2^n})=tr(T\circ S)= \sum_{x=1}^\infty tr(T_x\circ S_x)\leq C\frac{\sqrt{n}}{\sqrt{\ln n}}\sum_{x=1}^\infty\|T_x\|\ell(S_x)\\\leq
C\frac{\sqrt{n}}{\sqrt{\ln n}}\|T\|\sum_{x=1}^\infty\ell(S_x)\leq C'\frac{\sqrt{n}}{\sqrt{\ln n}}\|T\|\ell(S)\leq C'\frac{n}{\sqrt{\ln n}}\|T\|\|S\|,
\end{align*}where for the first inequality we have used Lemma \ref{concentration} and the last inequality follows from Equation (\ref{l-norm T}). The first part of the statement follows now trivially.

The second part of statement follows from well known results (see \cite{FLM} or \cite{JP}).
\end{proof}
\begin{remark}
One can easily verify that the previous proof also works if one considers complex Banach spaces. It is very interesting that in this case the $\sqrt{\ln n}$ factor  is optimal even in the noncommutative sense stated in the second part of Theorem \ref{optimal complementation}. Indeed, in the complex case one can define some operator space structures on the spaces $\ell_2^n$,  $\ell_1(\ell_\infty)$. We refer \cite{Pisierbook} for an introduction on the theory of operator spaces. In particular, we can consider the $R\cap C$ operator space structure on $\ell_2^n$ and the operator space structure on  $\ell_1(\ell_\infty)$ defined by the (operator space) projective tensor norm $\ell_1\hat{\otimes} \ell_\infty$ (this is usually referred as the natural operator space structure on $\ell_1( \ell_\infty)$). Then, the second part of Theorem \ref{optimal complementation} follows from \cite[Theorem 9]{JP}, which shows the existence of linear maps $j:R_n\cap C_n\longrightarrow \ell_1(\ell_\infty)$ and $P:\ell_1(\ell_\infty)\rightarrow R_n\cap C_n$ such that $P\circ j=id_{\ell_2^n}$ and $\|j\|_{cb}\|P\|_{cb}\leq \tilde{K}\sqrt{\ln n}$, where $\tilde{K}$ is a universal constant. 
\end{remark}
Let us finish this work with a final comment about the optimality of our main result Theorem \ref{Main II}. It can be deduced from \cite{JP} that for some Bell inequalities $M\in \mathcal M^{n,n}$ we have
\begin{align*}
\overline{\omega}_{OP_n}(M)\geq k\frac{n}{\ln n}\omega(M)
\end{align*}for some universal constant $k$. We do not know whether one can get the upper bound $O(\frac{n}{\ln n})$ in Theorem \ref{Main II}. On the other hand, $\overline{\omega}_{OP_n}(M)$
can be much larger than $LV_{|\psi_n\rangle}(M)$ for some $M\in
\mathcal M^{N,K}$. A particularly extreme case can be found for
$N=1$ and $K=n$, where one can find a certain element $M$ such that
$\omega(M)=\omega^*(M)=1$ and $\overline{\omega}_{OP_n}(M)\geq
\sqrt{n}$. Thus, another approach more focused on the specific
properties of the maximally entangled state could give a better
upper bound in Theorem \ref{Main I} without improving Theorem
\ref{Main II}. On the other hand, note that Theorem \ref{Theorem BRSW I} says that Theorem \ref{Main I} is very tight.

In the following, we will explain that in order to obtain an improvement of our Theorem \ref{Main II} a different approach from the one followed in this work is required. First, let us explain that Theorem \ref{Main II} and Theorem \ref{Main III} can be stated in terms of the so called $\gamma_2^*$ tensor norm. Given two Banach spaces $X$, $Y$ and their algebraic tensor product $X\otimes Y$, for a given $z\in X\otimes Y$ we define
\begin{align*}
\|z\|_{\gamma_{2,n}^*}=\sup\big\{\big\|(u\otimes v)(z)\big\|_{\ell_2^n\otimes_{\pi} \ell_2^n}\big\},
\end{align*}where the supremum runs over all linear maps $u:X\rightarrow \ell_2^n$, $v:Y\rightarrow \ell_2^n$ verifying $\|u\|, \|v\|\leq 1$. It is very easy to see that for every $M\in \R^{N^2K^2}$ we have
\begin{align*}
\omega_{OP_n}(M)=\|M\|_{\ell_1^N(\ell_\infty^K)\otimes_{\gamma_{2,n}^*}\ell_1^N(\ell_\infty^K)}.
\end{align*} Then, Theorem \ref{Main III} is equivalent to 
\begin{align}\label{gamma1}
\big\|id\otimes id:\ell_1^N(\ell_\infty^K)\otimes_\epsilon\ell_1^N(\ell_\infty^K)\rightarrow  \ell_1^N(\ell_\infty^K)\otimes_{\gamma_{2,n}^*}\ell_1^N(\ell_\infty^K)\big\|\leq D\frac{n}{\sqrt{\ln n}}.
\end{align}
To deal with general Bell inequalities one must replace the space $\ell_1^N(\ell_\infty^K)$ with the space $NSG^*(N,K)$ introduced in \cite[Section 5]{JP}. Then, Theorem \ref{Main II} is equivalent to 
\begin{align}\label{gamma2}
\big\|id\otimes id:NSG^*(N,K)\otimes_\epsilon NSG^*(N,K)\rightarrow  NSG^*(N,K)\otimes_{\gamma_{2,n}^*}NSG^*(N,K)\big\|\leq D'\frac{n}{\sqrt{\ln n}}.
\end{align}
It was proven in \cite[Section 5]{JP} that $NSG^*(N,K)$ is a \emph{twisted version} of the space $\ell_1^N(\ell_\infty^K)$ and it can be seen that proving (\ref{gamma1}) is equivalent to prove (\ref{gamma2}) with a slight modification in the constant. 

A careful study of the proof of Theorem \ref{Main
II} presented before shows that we have actually reduced the problem to study the
picture in which we have two linear maps $S:\ell_2^n\rightarrow
\ell_1(\ell_\infty)$ and $T:\ell_1(\ell_\infty)\rightarrow \ell_2^n$
such that $T\circ S=id_{\ell_2^n}$ and we must study how small
$\|T\|\|S\|$ can be; that is, the best complementation constant of $\ell_2^n$ in $\ell_1(\ell_\infty)$. Indeed, this estimate perfectly fits in the picture explained above and it can be used to upper bound the norms in (\ref{gamma1}) and (\ref{gamma2}). Hence, the fact that the estimate provided in the first part of Theorem \ref{optimal complementation} is optimal means that we cannot get a better upper bound in Theorem \ref{Main II} by reducing the problem in the way we have done in this work. 
\section*{Acknowledgments}
We would like to thank M. Junge, O. Regev and T. Vidick for many helpful discussions on previous versions.

Author's research was supported by EU grant QUEVADIS, Spanish projets QUITEMAD, MTM2011-26912 and MINECO: ICMAT Severo Ochoa project SEV-2011-0087 and the ``Juan de la Cierva'' program.

\vskip 2cm

\hfill \noindent \textbf{Carlos Palazuelos} \\
\null \hfill Instituto de Ciencias Matemáticas\\ \null \hfill
CSIC-UAM-UC3M-UCM \\ \null \hfill Consejo Superior de
Investigaciones Cient{\'\i}ficas \\ \null \hfill C/ Nicol\'as Cabrera 13-15.
28049, Madrid. Spain \\ \null \hfill\texttt{carlospalazuelos@icmat.es}

\end{document}